\newtheorem{theorem}{Theorem}
\newtheorem{lemma}{Lemma}
\newtheorem{remark}{Remark}
\newtheorem{definition}{Definition}
\newtheorem{notation}{Notation}
\newtheorem{proposition}{Proposition}
\newtheorem{claim}{Claim}
\newcommand{\E}{\ensuremath{\mathbb E}}
\newcommand{\R}{\ensuremath{\mathbb R}}
\newcommand{\lab}{\label}    \def\a{{\mathbf{\alpha}}} \def\de{{\mathbf{\delta}}}   
  \def\beq{\begin{eqnarray}} \def\eeq{\end{eqnarray}} \def\ben{\begin{enumerate}}
\def\een{\end{enumerate}} \def\bit{\begin{itemize}}
\def\bel{\begin{lemma}}
\def\eel{\end{lemma}}
\def\eit{\end{itemize}} \def\beqs{\begin{eqnarray*}} \def\eeqs{\end{eqnarray*}} \def\bel{\begin{lemma}} \def\eel{\end{lemma}}
 \newcommand{\Z}{\mathbb{Z}} \newcommand{\Q}{\mathbb{Q}}
  \newcommand{\la}{\lambda}  
  \def\eps{{\epsilon}}   
\def\vol{\mathrm{vol}}
\title{Approximating mixed volumes to arbitrary accuracy}
\author{Hariharan Narayanan}
\author{Sourav Roy}
\affil{School of Technology and Computer Science, Tata Institute of Fundamental Research}
\date{}
\begin{document}

\maketitle

\begin{abstract}
We study the problem of approximating the mixed volume $V(P_1^{(\a_1)}, \dots, P_k^{(\a_k)})$ of an $k$-tuple of convex polytopes $(P_1, \dots, P_k)$, each of which is defined as the convex hull of at most $m_0$ points in $\Z^n$.  We design an algorithm that produces an estimate that is within a multiplicative $1 \pm \epsilon$ factor of the true mixed volume with a probability greater than $1 - \de.$  Let the constant $ \prod_{i=2}^{k} \frac{(\alpha_{i}+1)^{\alpha_{i}+1}}{\alpha_{i}^{\,\alpha_{i}}}$ be denoted by $\tilde{A}$.  When each $P_i \subseteq B_\infty(2^L)$, we show in this paper that the time complexity of the algorithm is bounded above by a polynomial in $n, m_0, L, \tilde{A}, \epsilon^{-1}$ and $\log \de^{-1}$. In fact, a stronger result is proved in this paper, with slightly more involved terminology.

In particular, we provide the first randomized polynomial time algorithm for computing mixed volumes of such polytopes when  $k$ is an absolute constant, but $\a_1, \dots, \a_k$ are arbitrary. 
Our approach synthesizes tools from convex optimization, the theory of Lorentzian polynomials, and  polytope subdivision. 
 \end{abstract}

\section{Introduction}
Mixed volumes are central objects in convex geometry, capturing how the volume of a Minkowski sum of convex bodies depends on the shapes and sizes of the summands. They play a foundational role in the Brunn–Minkowski theory, and the Alexandrov–Fenchel inequalities.
Given an $k$-tuple of convex compact sets \(K = (K_1, \ldots, K_k)\) where for each $i$, \(K_i \subset \mathbb{R}^n \), the Minkowski polynomial is defined as
    \[
        V_{K}(\lambda_1, \ldots, \lambda_k) = \text{Vol}_n(\lambda_1 K_1 + \cdots + \lambda_k K_k),
    \]
    and the mixed volume is
    \[
        V(K_1^{(\a_1)}, \ldots, K_k^{(\a_k)}) = \left(\frac{1}{n!}\right) \left.\frac{\partial^n}{(\partial \lambda_1)^{\a_1} \cdots (\partial \lambda_k)^{\a_k}} V_K(\lambda_1, \ldots, \lambda_k) \right.
    \]

Mixed volumes also play a role in numerical algebraic geometry, through their appearance in the Bernstein–Kushnirenko–Khovanskii (BKK) theorem \cite{BKK,  kush, khov}, which connects convex geometry with the theory of sparse polynomial systems. The BKK theorem asserts that the number of solutions in $(\mathbb{C}^\times)^n$ of a generic system of $n$ polynomials in $n$ variables is equal to  the mixed volume of their Newton polytopes. 

\begin{theorem}[BKK]
Let $f_1, \dots, f_n$ be Laurent polynomials in $n$ variables of the form
\[
f_i(x) = \sum_{a \in A_i} c_{i,a} x^a, \quad x = (x_1, \dots, x_n), \quad x^a = x_1^{a_1} \cdots x_n^{a_n},
\]
where $A_i \subset \mathbb{Z}^n$ is finite, and $c_{i,a} \in \mathbb{C}$.

Let $P_i = \operatorname{conv}(A_i)$ be the Newton polytope of $f_i$, for $i = 1, \dots, n$.

Then the number of isolated solutions in $(\mathbb{C}^\times)^n$ to the system
\[
f_1(x) = f_2(x) = \dots = f_n(x) = 0
\]
counted with multiplicities, is at most $n!$ times the mixed volume i.e. $n! \operatorname{V}(P_1, \dots, P_n)$.

Moreover, for generic choices (i.e. in  the complement of a set of Lebesgue measure $0$) of the coefficients $c_{i,a}$, the number of isolated solutions in $(\mathbb{C}^\times)^n$ is exactly equal to $n!$ times the the mixed volume:
\[
\#\left\{x \in (\mathbb{C}^\times)^n : f_1(x) = \dots = f_n(x) = 0\right\} = n! \operatorname{V}(P_1, \dots, P_n).
\]
\end{theorem}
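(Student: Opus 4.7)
The plan is to prove BKK through toric geometry. The idea is to compactify $(\mathbb{C}^\times)^n$ by a smooth projective toric variety $X_\Sigma$ whose fan $\Sigma$ refines the normal fans of all the $P_i$; each Laurent polynomial $f_i$ then extends to a global section $s_i$ of a nef line bundle $L_i$ on $X_\Sigma$ corresponding to the lattice polytope $P_i$. All subsequent arguments take place on this common compactification.

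First I would establish the upper bound. Any isolated zero of the system in $(\mathbb{C}^\times)^n$ contributes, with multiplicity, to the top intersection number $L_1 \cdots L_n$ on $X_\Sigma$, and boundary contributions from $X_\Sigma \setminus (\mathbb{C}^\times)^n$ are nonnegative since the $L_i$ are nef. The key combinatorial identity then identifies this intersection number with $V(P_1,\ldots,P_n)$ in the normalization of the excerpt: both sides are symmetric and multilinear in the polytopes $P_i$ (equivalently, in the line bundles $L_i$), so it suffices to verify the diagonal case $P_1=\cdots=P_n=P$, where both reduce to $n!\,\mathrm{Vol}(P)$ by the classical formula $L^n = n!\,\mathrm{Vol}(P)$ for an ample toric line bundle.

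For the generic equality, I would use a Bertini-type argument to show that for coefficients outside a proper Zariski closed subset of $\prod_i \mathbb{C}^{A_i}$, the intersection $\bigcap_i \{s_i = 0\}$ is zero-dimensional, reduced, and entirely contained in the open torus. Avoidance of the boundary is the delicate point: on each proper torus orbit $O_\tau$ of codimension $r \geq 1$, the sections $s_i$ restrict to the face polynomials $f_i^\tau$ supported on the corresponding faces of the $P_i$, and for generic coefficients these $n$ Laurent polynomials on the lower-dimensional torus $O_\tau$ have empty common zero locus simply because $\dim O_\tau < n$. Combined with transversality inside the torus, the total number of solutions equals $L_1 \cdots L_n = V(P_1,\ldots,P_n)$.

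The main obstacle is the identity $L_1 \cdots L_n = V(P_1, \ldots, P_n)$ itself: this is the conceptual heart of BKK and is what translates an algebraic intersection count into convex-geometric mixed volume. The multilinearity reduction above works cleanly only after one verifies the single-polytope case $L^n = n!\,\mathrm{Vol}(P)$, which requires pairing the top power of an ample toric Chern class against the fundamental class; this is handled either through Minkowski weights on the Chow cohomology of $X_\Sigma$ or by asymptotic Riemann--Roch applied to $L^{\otimes k}$ as $k \to \infty$, exploiting that $h^0(X_\Sigma, L^{\otimes k})$ counts lattice points in $kP$. Once this identity is in place, the Bertini and boundary-avoidance steps are standard and, combined with the upper-bound argument, yield both assertions of the theorem.
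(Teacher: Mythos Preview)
The paper does not prove the BKK theorem: it is stated in the introduction as background material with references \cite{BKK, kush, khov} and is used only to motivate why mixed volumes are interesting. There is therefore no proof in the paper to compare against.

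Your sketch is one of the standard modern routes to BKK, essentially the toric-geometry proof as in Fulton's \emph{Introduction to Toric Varieties} or the Cox--Little--Schenck treatment. The steps you outline are correct: passing to a common smooth projective toric compactification, extending each $f_i$ to a section of the nef bundle $L_i$ associated to $P_i$, bounding the torus solutions by the intersection number $L_1\cdots L_n$, and identifying that number with the mixed volume by multilinearity plus the diagonal formula $L^n=n!\,\mathrm{Vol}(P)$. The generic-equality step via Bertini and restriction to torus orbits is also the right mechanism. One point to be slightly careful about in a full write-up is the upper bound when the intersection $\bigcap_i\{s_i=0\}$ is not zero-dimensional: you need refined intersection theory (or a deformation argument) to still extract the inequality for the number of isolated zeros counted with multiplicity, but this is handled in the references cited. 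Overall the proposal is a correct and standard proof; it simply does not correspond to anything in the present paper, which takes BKK as given.
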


The work of Dyer--Gritzmann--Hufnagel \cite{DGH} operates in the oracle model for what are defined to be well-presented convex bodies. This class includes polytopes. Their paper gives both complexity results and some positive approximation results. The authors show that computing mixed volumes is \#P-hard in general, even when the volume is easy to compute and all bodies are axis-aligned boxes or zonotopes. They provide a randomized polynomial-time algorithm for approximating mixed volumes when 
\[
m_1 \geq n - \psi(n) \quad \text{with} \quad \psi(n) = o\left( \frac{\log n}{\log \log n} \right),
\]
where $m_1$ is the multiplicity of the first convex body. 

\begin{remark}
On the one hand, work of Dyer--Gritzmann-Hufnagel \cite{DGH} considers convex sets that are not necessarily polytopes, and thus encompasses bodies for which Theorem~\ref{thm:main} above is not applicable. On the other,  it is also more restrictive in that when $k$ is a constant, (and the convex bodies are contained in $B_\infty(2^L)$ the convex hulls of polynomially many lattice points) not all mixed volumes are shown to be approximable within $ 1 \pm \eps$  in randomized polynomal time. These mixed volumes (as well as some more general ones) can be computed in polynomial time using Theorem~\ref{thm:main}.
\end{remark}

We work with rational numbers in this paper and account for their full bit complexity.
   \subsection{Capacity:} When $k = n$ and all the $\a_i = 1$, Gurvits \cite{Gurvits} defined the capacity 
    \[
        \text{Cap}(V) = \inf_{\mathbf{x} > 0} \frac{V_K(x_1, \ldots, x_n)}{\prod_{i=1}^n x_i},
    \]
    which is used as a  proxy for \( V(K_1, \ldots, K_n) \).

 Gurvits,  in \cite{Gurvits},  also presented a randomized polynomial-time algorithm that approximates the mixed volume \( V(K_1, \ldots, K_n) \) of an \(n\)-tuple of convex compact subsets \(K_1, \ldots, K_n \subset \mathbb{R}^n\) within a multiplicative factor of asymptotically \( e^n \) by showing the following. 
    \[
        \frac{n!}{n^n} \cdot \text{Cap}(V) \leq n! V(K_1, \ldots, K_n) \leq \text{Cap}(V).
    \]
    
    \begin{remark}\lab{rem:0}
    Gurvits conjectures (in Conjecture 2 of \cite{Gurvits}) that in the setup of Dyer--Gritzmann--Hufnagel from \cite{DGH}, it is impossible to design a Fully-Polynomial-Randomized-Approximation-Scheme (FPRAS) for general mixed volumes.
    \end{remark}
    
   \begin{remark}  \lab{rem:1}
   A randomized polynomial-time algorithm is given in \cite{Gurvits} that produces a vector $\mathbf{\la} = (\la_1, \dots, \la_n)$ such that $$\log \left( \frac{V_K(\la_1, \ldots, \la_n)}{\prod_{i=1}^n \la_i}\right) < \eps + \log \mathrm{Cap}(V),$$ and thereby approximates  \( \log \mathrm{Cap}(V) \) to within an additive error of $\epsilon$, with probability greater than $1 - \de$. 
   \end{remark}
    When each $K_i$ is a convex hull $x_i \subseteq B_\infty(2^L)$ of at most $m_0$ points in $\Z^n$, the algorithm in \cite{Gurvits} runs in time polynomial in \( n \), $m_0$, $L$, $\epsilon^{-1}$ and $\log \de^{-1}$.

 

\begin{table}[h!]
\centering
\renewcommand{\arraystretch}{1.3}
\begin{tabular}{|c|p{10cm}|}
\hline
\textbf{Symbol} & \textbf{Meaning} \\
\hline
$P_1, \dots, P_k$ & Convex polytopes in $\mathbb{R}^n$, each given as the convex hull of at most $m_0$ points in $\mathbb{Z}^n$ \\
\hline
$\alpha = (\alpha_1, \dots, \alpha_k)$ & A tuple of nonnegative integers such that $\sum_i \alpha_i = n$ \\
\hline
$V(P_1^{(\alpha_1)}, \dots, P_k^{(\alpha_k)})$ & Mixed volume corresponding to the multiset $(P_1^{(\alpha_1)}, \dots, P_k^{(\alpha_k)})$ \\
\hline
$V_K(\lambda_1, \dots, \lambda_k)$ & Minkowski polynomial: volume of $\lambda_1 K_1 + \cdots + \lambda_k K_k$ \\
\hline
$\text{Cap}(V)$ & Gurvits' capacity: $\inf_{x > 0} \frac{V_K(x_1,\dots,x_n)}{\prod x_i}$ \\
\hline
$\text{Cap}_\alpha(p)$ & Weighted capacity: $\inf_{x > 0} \frac{p(x)}{x^\alpha}$ \\
\hline
$\widetilde{A}$ & The product $\prod_{i=2}^k \frac{(\alpha_i + 1)^{\alpha_i + 1}}{\alpha_i^{\alpha_i}}$, used in bounding the runtime \\
\hline
$A$ & A refined constant (Definition 2), possibly smaller than $\widetilde{A}$, used in runtime bounds \\
\hline
$B_\infty(2^L)$ & The $\ell_\infty$ ball of radius $2^L$, assumed to contain each $P_i$ \\
\hline
$N(F, P)$ & Normal cone of face $F$ of polytope $P$ \\
\hline
$[F_1, \dots, F_k]$ & Volume of the parallelepiped formed by summing unit cubes in the affine hulls of $F_i$ \\
\hline
$\lambda = (\lambda_1, \dots, \lambda_k)$ & Vector used to evaluate the Minkowski polynomial; chosen to approximate the capacity \\
\hline
$Q_z$ & Polytope of weight vectors  $(w_{ij})_{i, j}$ such that $\sum_{i,j} w_{ij} \lambda_i v_{ij} = z$ and $\sum_{i,j} w_{ij} = 1$ \\
\hline
$\mathbf{w}$ & Vertex of $Q_z$ used to decompose $z$ as a convex combination of vertices from each $\lambda_i P_i$ \\
\hline
$z^{(i)}$ & Weighted point in $P_i$ corresponding to the decomposition of $z$ \\
\hline
$F_i$ & Face of $P_i$ containing $z^{(i)}$ in its relative interior \\
\hline
$W_1(\mu, \nu)$ & Wasserstein-1 distance between probability measures $\mu$ and $\nu$ \\
\hline
$N(p)$ & Normalization operator: converts ordinary generating function $p$ into an exponential generating function \\
\hline
\end{tabular}
\caption{Table of notation}
\end{table}

\section{High-level Road-map}
To orient the reader, we sketch the main ideas before presenting details.

\begin{enumerate}[label=(\arabic*)]
  \item \textbf{Capacity surrogate.}  
        We use a generalisation of Gurvits’ capacity due to \cite{BLP} to arbitrary multiplicities, approximate it in
        $\mathrm{poly}(n,m_0, L,\epsilon^{-1},\log\delta^{-1})$ time, and find scaling factors $\la_1, \dots, \la_k$.
  \item \textbf{Subdivision of the Minkowski sum.}  
        Schneider’s theorem decomposes $\sum_{i}\lambda_i P_i$ into disjoint face-sums
        compatible with the multiplicity vector $\alpha$.
  \item \textbf{Sampling estimator.}  
        We sample
        $z\in\sum\lambda_i P_i$ nearly uniformly, decompose $z$ via an LP vertex,
        and test face dimensions.
         A Chernoff bound over $\texttt{N}=\Theta(A\epsilon^{-2}\log\delta^{-1})$
        samples yields a $(1\pm\epsilon)$ estimator of the mixed volume.
\end{enumerate}

 \section{Bounds from the theory of Lorentzian polynomials}
 
 We have also have bounds depending on $\a$ relating the capacity to the mixed volume, due to Br\"{a}nd\'{e}n, Leake and Pak \cite{BLP}. Before giving the details, we will need a refined notion of capacity, $\operatorname{Cap}_\a$ that is used in the results of $\cite{BLP}$.
 
 \begin{definition}[Capacity {\cite[Def.~5.1]{BLP}}]
Let \(p \in \mathbb{R}_{\ge 0}[x_{1},\dots,x_{k}]\) be a polynomial with non-negative coefficients, and let 
\(\alpha=(\alpha_{1},\dots,\alpha_{k}) \in \mathbb{Z}^{k}_{\ge 0}\).
The \emph{capacity} of \(p\) with respect to the weight vector \(\alpha\) is
\[
  \operatorname{Cap}_{\alpha}(p)
  := \inf_{x>0}\frac{p(x)}{x^{\alpha}}
  \;=\;
  \inf_{x_{1},\dots,x_{k}>0}
        \frac{p(x_{1},\dots,x_{k})}{
              x_{1}^{\alpha_{1}}\cdots x_{k}^{\alpha_{k}} }.
\]
\end{definition}

Let $N$ be the linear operator defined by the condition $N(w^\a) = \frac{w^\a}
{\a!}$. The
normalization operator $N$ turns generating functions into exponential generating functions.
Given a polynomial $p$, if $N(p)$ is Lorentzian, $p$ is called denormalized Lorentzian \cite{BLP}.
By Corollary 3.7 of Br\"{a}nd\'{e}n-Huh \cite{BH},  if $f$ is a Lorentzian polynomial (see Definition 2.1 of \cite{BH} for a definition of this class of polynomials), then it follows that
$N(f)$ is a Lorentzian polynomial.

Thus all Lorentzian polynomials are also denormalized Lorentzian polynomials. 
By Theorem 4.1 of \cite{BH}, we have the following.
\begin{theorem}[Br\"{a}nd\'{e}n-Huh, Thm.~4.1]
The volume polynomial $V_K$ is a Lorentzian polynomial
for any $k$-tuple of convex sets  $K = (K_1,...,K_k)$.
\end{theorem}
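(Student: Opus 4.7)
The plan is to verify directly the two defining conditions of a Lorentzian polynomial for $V_K$: nonnegativity of coefficients with $M$-convex support, and the condition that every $(n-2)$-fold partial derivative is a quadratic form with at most one positive eigenvalue. Both reduce to classical facts about mixed volumes, the deepest of which is the Alexandrov--Fenchel inequality.

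First I would write the Minkowski expansion
\[
V_K(\lambda_1,\dots,\lambda_k)\;=\;\sum_{|\alpha|=n}\binom{n}{\alpha}\,V\bigl(K_1^{(\alpha_1)},\dots,K_k^{(\alpha_k)}\bigr)\,\lambda^{\alpha},
\]
which exhibits $V_K$ as a homogeneous polynomial of degree $n$ with nonnegative coefficients (mixed volumes of convex bodies are nonnegative). Next, to check that $\operatorname{supp}(V_K)$ is $M$-convex, I would establish the exchange property: if $\alpha,\beta\in\operatorname{supp}(V_K)$ and $\alpha_i>\beta_i$, then there exists $j$ with $\alpha_j<\beta_j$ such that $\alpha-e_i+e_j\in\operatorname{supp}(V_K)$. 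This follows from the multilinearity and monotonicity of mixed volumes together with a dimension-counting argument on affine hulls of partial Minkowski sums of the $K_i$, of the sort used in the classical characterization of when a mixed volume vanishes.

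For the Hessian-signature condition, fix $\gamma\in\mathbb{Z}_{\ge 0}^k$ with $|\gamma|=n-2$ and form the symmetric matrix $H^\gamma$ whose $(i,j)$-entry is proportional to
\[
V\bigl(K_1^{(\gamma_1)},\dots,K_k^{(\gamma_k)},K_i,K_j\bigr).
\]
The Alexandrov--Fenchel inequality states precisely that every $2\times 2$ principal-type minor satisfies $V(\cdot,K_i,K_j)^2\ge V(\cdot,K_i,K_i)\,V(\cdot,K_j,K_j)$, which together with the nonnegativity of the entries forces $H^\gamma$ to have at most one positive eigenvalue by a Perron--Frobenius-type argument. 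To extend from pure coordinate derivatives to arbitrary $(n-2)$-fold directional derivatives in nonnegative directions, I would invoke multilinearity: any such derivative can be rewritten as the Hessian of $V_{K'}$ for an appropriate pair of nonnegative Minkowski combinations $K'_1,K'_2$ of the $K_i$, and Alexandrov--Fenchel applies to those combinations as well.

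The principal obstacle is of course the Alexandrov--Fenchel inequality itself, which is the deep analytic/geometric ingredient and which I would cite as a black box from \cite{BH}. The verification of $M$-convex support is comparatively elementary once the nonvanishing criterion for mixed volumes is in hand, and the reduction from coordinate to general nonnegative directional derivatives is a routine consequence of multilinearity.
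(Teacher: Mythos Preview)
The paper does not prove this theorem at all; it is quoted verbatim from Br\"and\'en--Huh \cite{BH} and used as a black box. So there is no proof in the paper to compare your attempt against. That said, your outline is essentially the strategy Br\"and\'en--Huh themselves use, with one real gap and one unnecessary detour.

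The gap is in your Hessian step. You claim that the \emph{pairwise} Alexandrov--Fenchel inequalities
\[
V(\,\cdot\,,K_i,K_j)^2 \ge V(\,\cdot\,,K_i,K_i)\,V(\,\cdot\,,K_j,K_j)
\]
together with nonnegativity of the entries force $H^\gamma$ to have at most one positive eigenvalue ``by a Perron--Frobenius-type argument.'' This is false as stated: the symmetric matrix
\[
H=\begin{pmatrix}0&1&0\\1&0&0\\0&0&1\end{pmatrix}
\]
has nonnegative entries and satisfies $H_{ij}^2\ge H_{ii}H_{jj}$ for every pair $i\neq j$, yet its eigenvalues are $1,1,-1$. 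What actually gives the signature condition is the \emph{full} Alexandrov--Fenchel inequality, valid for arbitrary convex bodies $C_1,C_2$ (equivalently, Alexandrov's statement that the bilinear form $V(\,\cdot\,,\,\cdot\,,K_\gamma)$ on differences of support functions has Lorentzian signature). Via multilinearity this yields $(u^TH^\gamma v)^2\ge(u^TH^\gamma u)(v^TH^\gamma v)$ for all $u,v\in\mathbb{R}^k_{\ge 0}$, and it is this quantified version---not merely the $\binom{k}{2}$ scalar inequalities---that a Perron--Frobenius reduction can turn into ``at most one positive eigenvalue.'' Your write-up conflates the two.

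The detour is your final paragraph about extending to $(n-2)$-fold directional derivatives in nonnegative directions. The definition of Lorentzian (Br\"and\'en--Huh, Theorem~2.25) only requires the signature condition for \emph{coordinate} partial derivatives $\partial^\gamma$, so once the coordinate case is handled you are done; the equivalence with the directional version is a theorem of \cite{BH}, not part of what you must verify.

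Your treatment of $M$-convex support via the classical vanishing criterion for mixed volumes is correct in outline and matches what Br\"and\'en--Huh do.
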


 \begin{theorem}[Brändén–Leake–Pak, Thm.~5.10]
Let \(p\in\mathbb{R}_{>0}[x_{1},\dots ,x_{k}]\) be a \emph{denormalised Lorentzian} polynomial of total degree \(n\),
\[
  p(x_{1},\dots ,x_{k}) \;=\;
  \sum_{\mu_{1}+\dots+\mu_{k}=n} p_{\mu}\,x^{\mu}.
\]

For \(1\le i\le k-1\) define
\[
  d_{i} \;:=\;
  \deg_{x_{i}}\!\Bigl(
        \partial_{i+1}^{\alpha_{i+1}}\!\cdots\!
        \partial_{k}^{\alpha_{k}}\,p
      \Bigr)\Big|_{x_{i+1}=\dots =x_{n}=0},
  \qquad
  d_{k}:=\deg_{x_{k}}p.
\]

Let \(\alpha=(\alpha_{1},\dots ,\alpha_{k})\in\mathbb{N}^{k}\) satisfy
\(\alpha_{1}+\dots+\alpha_{k}=n\).
Then
\[
  \frac{p_{\alpha}}{\operatorname{Cap}_{\alpha}(p)}
  \;\ge\;
  \biggl[
    \prod_{i=2}^{k}
      \operatorname{Cap}_{(d_{i}-\alpha_{i},\,\alpha_{i})}
      \!\Bigl(\,\sum_{j=0}^{d_{i}} x^{j}y^{\,d_{i}-j}\Bigr)
  \biggr]^{-1}
  \;\ge\;
  \prod_{i=2}^{k}
  \max\!\Biggl\{
     \frac{\alpha_{i}^{\,\alpha_{i}}}{(\alpha_{i}+1)^{\alpha_{i}+1}},
     \frac{(d_{i}-\alpha_{i})^{\,d_{i}-\alpha_{i}}}
          {(d_{i}-\alpha_{i}+1)^{\,d_{i}-\alpha_{i}+1}}
  \Biggr\}.
\]
\end{theorem}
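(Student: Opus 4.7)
The plan is to establish the first (capacity–ratio) inequality by induction on $k$, and to verify the explicit second inequality by a direct one-variable calculation. The base case $k=1$ is immediate: $p$ is univariate of degree $n=\alpha_1$ with non-negative coefficients, so $\operatorname{Cap}_{(n)}(p)=\lim_{x\to\infty}p(x)/x^n=p_n=p_\alpha$, and both sides of the first inequality equal $1$, matching the empty product on the right.

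For the inductive step I would peel off $x_k$. Set $q(x_1,\dots,x_{k-1}):=\tfrac{1}{\alpha_k!}\bigl(\partial_{x_k}^{\alpha_k}p\bigr)\big|_{x_k=0}$; a direct coefficient check shows $N(q)=\bigl(\partial_{x_k}^{\alpha_k}N(p)\bigr)\big|_{x_k=0}$, so by closure of the Lorentzian class under $\partial$ and boundary specialisation (Br\"{a}nd\'{e}n--Huh), $q$ is denormalised Lorentzian in $k-1$ variables of total degree $n-\alpha_k$. Coefficient extraction yields $q_{(\alpha_1,\dots,\alpha_{k-1})}=p_\alpha$, and the degree data $d_1,\dots,d_{k-1}$ defined from $q$ coincide with those from $p$ by construction. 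Applying the inductive hypothesis to $q$ then reduces the first inequality to the one-variable \emph{peeling lemma}
\[
\operatorname{Cap}_{(\alpha_1,\dots,\alpha_{k-1})}(q)\;\ge\;\operatorname{Cap}_\alpha(p)\cdot\operatorname{Cap}^{-1}_{(d_k-\alpha_k,\alpha_k)}\Bigl(\textstyle\sum_{j=0}^{d_k}x^jy^{d_k-j}\Bigr).
\]

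This peeling lemma is the main obstacle. I would prove it by fixing an arbitrary $x^*>0$ in $\mathbb{R}^{k-1}$ and studying the univariate polynomial $r(y):=p(x^*,y)=\sum_{j=0}^{d_k}c_jy^j$ of degree $d_k$, noting that $q(x^*)=c_{\alpha_k}$. Using the log-concavity of the normalised coefficient sequence $(c_j/j!)$ inherited from the Lorentzian hypothesis on $p$ (together with specialisation at positive $x^*$), one shows the pointwise estimate $\inf_{y>0}r(y)/y^{\alpha_k}\le c_{\alpha_k}\cdot\operatorname{Cap}_{(d_k-\alpha_k,\alpha_k)}(\sum_j x^j y^{d_k-j})$. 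This identifies the all-ones homogenisation $\sum_j x^j y^{d_k-j}$ as extremal in this ratio among univariate denormalised Lorentzian polynomials of fixed degree, and, taking the infimum over $x^*$, integrates to the desired capacity inequality.

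The explicit second inequality follows from a direct substitution. After the change of variables $t=y/x$, one has
\[
\operatorname{Cap}_{(d-\alpha,\alpha)}\Bigl(\textstyle\sum_{j=0}^{d}x^jy^{d-j}\Bigr)\;=\;\inf_{t>0}\sum_{\ell=-\alpha}^{d-\alpha}t^\ell\;=\;\inf_{t>0}\frac{1-t^{d+1}}{t^{\alpha}(1-t)}.
\]
Evaluating at $t=\alpha/(\alpha+1)$ and using $1-t^{d+1}\le 1$ gives the upper bound $((\alpha+1)/\alpha)^\alpha(\alpha+1)=(\alpha+1)^{\alpha+1}/\alpha^\alpha$; the symmetric choice exploiting the $t\mapsto 1/t$ symmetry (swapping $\alpha$ with $d-\alpha$) yields the dual bound $(d-\alpha+1)^{d-\alpha+1}/(d-\alpha)^{d-\alpha}$. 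Inverting and taking the minimum of the two upper bounds produces the maximum appearing on the right-hand side of the theorem.
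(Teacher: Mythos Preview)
The paper does not prove this theorem at all: it is quoted as Theorem~5.10 of Br\"and\'en--Leake--Pak \cite{BLP} and used as a black box, so there is no ``paper's own proof'' to compare against.

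Your outline is nonetheless the expected one, and (to the extent one can compare) matches the strategy in \cite{BLP}: induct on $k$ by peeling off $x_k$, pass to $q=\tfrac{1}{\alpha_k!}(\partial_{x_k}^{\alpha_k}p)\big|_{x_k=0}$, check that $q$ is again denormalised Lorentzian with the same $d_1,\dots,d_{k-1}$, and reduce everything to a one-variable capacity comparison. Your verification of the second inequality is clean and correct: the substitution $t=y/x$ gives $\inf_{t>0}(1-t^{d+1})/(t^{\alpha}(1-t))$, and evaluating at $t=\alpha/(\alpha+1)$ and at $t=(d-\alpha)/(d-\alpha+1)$ (via the $t\mapsto 1/t$ symmetry) yields exactly the two stated bounds.

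The one place that is not yet a proof is precisely the step you flag as ``the main obstacle'': the peeling lemma
\[
\inf_{y>0}\frac{r(y)}{y^{\alpha_k}}\;\le\;c_{\alpha_k}\cdot\operatorname{Cap}_{(d_k-\alpha_k,\alpha_k)}\Bigl(\textstyle\sum_{j=0}^{d_k}x^jy^{d_k-j}\Bigr),
\]
i.e.\ the claim that the all-ones polynomial is extremal for the ratio $c_{\alpha_k}/\operatorname{Cap}_{\alpha_k}(r)$ among univariate denormalised Lorentzian polynomials of degree $d_k$. Invoking log-concavity of $(c_j/j!)$ is the right instinct, but it is not a one-line consequence: one still has to show how that log-concavity forces the stated capacity inequality (in \cite{BLP} this is done through their bivariate Lorentzian capacity bounds rather than by a bare Newton-inequality argument). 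As written, this step is asserted rather than proved, so the sketch identifies the correct architecture but leaves the substantive inequality unjustified.
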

\begin{definition} 
\lab{def:4}
When $x = (x_1, \dots, x_k)$, for a given $\a$ as above, let the corresponding constant $ \prod_{i=2}^{k}
 \min\!\Biggl\{
     \frac{(\alpha_{i}+1)^{\alpha_{i}+1}}{\alpha_{i}^{\,\alpha_{i}}},
     \frac{(d_{i}-\alpha_{i}+1)^{\,d_{i}-\alpha_{i}+1}}{(d_{i}-\alpha_{i})^{\,d_{i}-\alpha_{i}}}
  \Biggr\}
 $ be denoted by $A$.  
 \end{definition}
 
 It is a property of $V_K(x_1, \dots, x_k)$,  that $d_i \leq \mathrm{deg}_{x_i}V_K = \dim(K_i)$. 
\begin{lemma} 
The quantity $A$ is bounded above by $(\frac{e(n+k-1)}{k-1})^{k-1}$.
\end{lemma}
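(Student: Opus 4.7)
The plan is to bound each factor in the product defining $A$ by a closed-form quantity and then apply AM–GM.

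First I would estimate one factor. For any integer $m \geq 0$ (using the convention $0^0 = 1$),
\[
\frac{(m+1)^{m+1}}{m^{m}} \;=\; (m+1)\Bigl(1+\tfrac{1}{m}\Bigr)^{m} \;\le\; e\,(m+1),
\]
since $(1+1/m)^m \le e$ (and the $m=0$ case trivially gives $1 \le e$). Taking $m=\alpha_i$ in the min in Definition~\ref{def:4} gives
\[
\min\!\Biggl\{
     \frac{(\alpha_{i}+1)^{\alpha_{i}+1}}{\alpha_{i}^{\,\alpha_{i}}},\;
     \frac{(d_{i}-\alpha_{i}+1)^{\,d_{i}-\alpha_{i}+1}}{(d_{i}-\alpha_{i})^{\,d_{i}-\alpha_{i}}}
  \Biggr\}
\;\le\; \frac{(\alpha_{i}+1)^{\alpha_{i}+1}}{\alpha_{i}^{\,\alpha_{i}}}
\;\le\; e(\alpha_{i}+1).
\]

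Next I would multiply the $k-1$ factors and apply AM–GM to the arithmetic mean of $\alpha_i+1$:
\[
A \;\le\; e^{\,k-1} \prod_{i=2}^{k}(\alpha_{i}+1)
\;\le\; e^{\,k-1}\!\left(\frac{1}{k-1}\sum_{i=2}^{k}(\alpha_{i}+1)\right)^{\!k-1}.
\]
Finally I would use $\sum_{i=1}^{k}\alpha_i = n$, so $\sum_{i=2}^{k}\alpha_i \le n$, yielding $\sum_{i=2}^{k}(\alpha_i+1) \le n + (k-1)$. Substituting,
\[
A \;\le\; e^{\,k-1}\!\left(\frac{n+k-1}{k-1}\right)^{\!k-1}
\;=\; \left(\frac{e(n+k-1)}{k-1}\right)^{\!k-1},
\]
which is the desired bound.

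I do not anticipate any real obstacle: the argument is just an elementary inequality combined with AM–GM. The only mild care needed is to handle $\alpha_i = 0$ via the $0^0 = 1$ convention (which matches how $\mathrm{Cap}_{(d_i-\alpha_i,\alpha_i)}(\cdot)$ is used in the Brändén–Leake–Pak bound when $\alpha_i = 0$, since the corresponding factor then equals $1$), and to select the $\alpha_i$-branch of the min so that the total exponent sum is controlled by $n$ rather than by $\sum d_i$, which could otherwise be as large as $\sum \dim(P_i)$.
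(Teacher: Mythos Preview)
Your proof is correct and follows essentially the same approach as the paper: bound $A$ by $\tilde{A}$ via the $\alpha_i$-branch of the min, use $(1+1/m)^m\le e$ to get $(m+1)^{m+1}/m^m\le e(m+1)$, and then apply AM--GM to $\prod_{i=2}^k(\alpha_i+1)$ together with $\sum_{i=2}^k\alpha_i\le n$. You are slightly more explicit about the $\alpha_i=0$ case, but the argument is the same.
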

\begin{proof}
Note that $A$ is bounded above by $\prod_{i=2}^{k}\frac{(\alpha_{i}+1)^{\alpha_{i}+1}}{\alpha_{i}^{\,\alpha_{i}}}$.  By taking a logarithm,  and using the concavity of the logarithm,  we see for $\a_i > 1$,  that $(1 + \a_i^{-1})^{\a_i} \leq e$ for all positive real values of $\a_i$.  Secondly,  $\prod_{i=2}^{k} (\alpha_{i}+1),$ by the A.M.-G.M. inequality,  is bounded above by $\left(\frac{(n+k-1)}{k-1}\right)^{k-1}$. The lemma follows.
\end{proof}

\noindent From Remark~\ref{rem:1}, it follows that it is possible to use the randomized polynomial-time algorithm given in \cite{Gurvits} to produce a vector $\mathbf{\la} = (\la_1, \dots, \la_k) \in \R_+^k$ such that $$\log \text{Cap}_\a(V) \leq \log \left( \frac{V_K(\la_1, \ldots, \la_k)}{\prod_{i=1}^n \la_i^{\a_i}}\right) < \eps + \log \text{Cap}_\a(V),$$ with probability greater than $1 - \de$. Proposition~\ref{prop:6} shows that $\la$ can in fact be chosen to be an integer vector with polynomial bitlength.

\begin{proposition}[Proposition 3.4, \cite{Gurvits}]\lab{prop:gurv}
Consider an  $n$-tuple of convex compact sets 
$K = (K_1, \ldots, K_n)$ with 
$A_i(\mathrm{B}_n(0, 1)) \subset K_i \subset y + n\sqrt{n + 1} \times A_i(\mathrm{B}_n(0, 1))$, 
$1 \le i \le n$, with integer $n \times n$ matrices $A_i$. Let $\langle K \rangle$ denote the bitlength of $(A_1, \dots, A_n)$. Then the minimum in the 
convex optimization problem
\begin{equation}
\log\left( \mathrm{Cap}(V_K) \right) = \inf_{\substack{y_1 + \cdots + y_n = 0}} 
\log\left( V_K(e^{y_1}, \ldots, e^{y_n}) \right).
\tag{2}
\end{equation}
 is attained and is unique. The unique minimizing 
vector $(z_1, \ldots, z_n)$, $\sum_{i=1}^n z_i = 0$, satisfies the following inequalities:
\[
|z_i - z_j| \le O\left(n^{3/2}(\log n + \langle K \rangle)\right), \quad 
\|(z_1, \ldots, z_n)\|_2 \le O\left(n^2(\log n + \langle K \rangle)\right).
\]
In other words, the convex optimization problem \emph{(2)} can be solved on the following ball in $\mathbb{R}^{n-1}$:
\[
\mathrm{Apr}(K) = \left\{ (z_1, \ldots, z_n) : \|(z_1, \ldots, z_n)\|_2 \le O\left(n^2(\log n + \langle K \rangle)\right), \ \sum_{i=1}^n z_i = 0 \right\}.
\]
The following inequality follows from the Lipschitz property:
\[
\left| \log \left( V_K(e^{y_1}, \ldots, e^{y_n}) \right) - 
\log \left( V_K(e^{\ell_1}, \ldots, e^{\ell_n}) \right) \right|
\le O\left(n^3 (\log n + \langle K \rangle)\right), \quad Y, L \in \mathrm{Apr}(K).
\]
\end{proposition}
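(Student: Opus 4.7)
My plan for the proof of Proposition~\ref{prop:gurv} is to decompose it into four claims, following the structure of the statement. Let $F(y) := \log V_K(e^{y_1},\ldots,e^{y_n})$, and let $S$ denote the constraint hyperplane $\{y \in \mathbb{R}^n : \sum_i y_i = 0\}$. Since $V_K$ is a homogeneous polynomial of degree $n$ with nonnegative coefficients (in fact Lorentzian, by Br\"{a}nd\'{e}n--Huh), $F$ is convex as a log-sum-of-exponentials in $y$; moreover $F(y + t\mathbf{1}) = F(y) + nt$, so restricting to $S$ is equivalent to passing to the scale-invariant ratio $V_K(e^y)/\prod_i e^{y_i}$ that appears in the capacity.

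First I would settle existence and uniqueness. For existence, the inner ellipsoid containment $A_i(B_n(0,1)) \subset K_i$ lets me lower-bound $V_K(e^y)$ on $S$ by $\prod_i |\det A_i|$ times a mixed volume of scaled unit balls, which grows like $\exp(c\|y\|)$ along any ray in $S$; this coercivity forces the infimum to be attained in a bounded region. For uniqueness, the nonemptiness of interiors (again from the inner ellipsoid) ensures that the Newton polytope of $V_K$ affinely spans the degree-$n$ simplex in the hyperplane $\{\mu : \sum_i \mu_i = n\}$, which makes $F$ strictly convex on $S$ modulo translation by $\mathbf{1}$.

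Second I would derive the bound on the minimizer $z$. The first-order optimality condition on $S$ states that all partial derivatives $\partial_i F(z)$ are equal. Each $\partial_i F(z)$ is a convex combination --- weighted by the monomial contributions of $V_K$ at $e^z$ --- of the exponents $\mu_i$ appearing in monomials of $V_K$. Here I expect the main technical obstacle: converting the sandwich condition on $K_i$ into a pairwise bound $|z_i - z_j| = O(n^{3/2}(\log n + \langle K \rangle))$. The idea is that the outer-ellipsoid containment bounds every coefficient of $V_K$ above by a polynomial in $\prod_i |\det A_i|$, while a specific mixed-term coefficient is bounded below by the same quantity; if any $|z_i - z_j|$ were too large, a pure-power monomial such as $x_i^n$ would dominate $V_K(e^z)$ and force $\partial_i F(z)$ to exceed $\partial_j F(z)$, contradicting stationarity. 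The $\ell_2$ estimate $\|z\|_2 = O(n^2(\log n + \langle K \rangle))$ then follows by distributing these pairwise gaps against $\sum_i z_i = 0$.

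Finally, the Lipschitz estimate falls out of the bounded support of the Newton polytope of $V_K$. The gradient $\nabla F(y) = \bigl(e^{y_i}\partial_i V_K(e^y)/V_K(e^y)\bigr)_i$ is a probability-weighted average of exponent vectors $\mu$ in that polytope, and each such $\mu$ satisfies $0 \le \mu_i \le n$, so $\|\nabla F(y)\|_2 \le n\sqrt{n}$. Multiplying by the diameter $O(n^2(\log n + \langle K \rangle))$ of $\mathrm{Apr}(K)$ gives the claimed $O(n^3(\log n + \langle K \rangle))$ variation of the objective. The hardest part of the plan is the coefficient chasing in the third paragraph, which is what binds the geometry encoded by the $A_i$ to the algebraic structure of $V_K$ in a dimension-tight way.
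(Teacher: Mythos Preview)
The paper does not prove this proposition at all: it is quoted verbatim as Proposition~3.4 of \cite{Gurvits} and used as a black-box input to Proposition~\ref{prop:6}. There is therefore no ``paper's own proof'' to compare your sketch against.

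That said, your outline is broadly in the right spirit. A couple of small corrections: in the Lipschitz step, homogeneity of $V_K$ gives $\sum_i \partial_i F(y) = n$ with each $\partial_i F(y)\ge 0$, so $\|\nabla F(y)\|_2 \le \|\nabla F(y)\|_1 = n$ rather than $n\sqrt{n}$ (this is exactly the content of Proposition~3.2 in \cite{Gurvits}, which the present paper also invokes); your weaker bound still suffices for the stated $O(n^3(\log n + \langle K\rangle))$ conclusion. The genuinely delicate step, as you correctly flag, is the pairwise bound $|z_i - z_j| = O(n^{3/2}(\log n + \langle K\rangle))$, and your argument there is only a heuristic: ``a pure-power monomial would dominate and force $\partial_i F > \partial_j F$'' needs a quantitative comparison of the extremal coefficients of $V_K$ against the ratio of the John ellipsoids, and this is where the sandwiching constant $n\sqrt{n+1}$ and the bit-length $\langle K\rangle$ actually enter. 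For a complete proof you would need to consult Gurvits' original argument, since the present paper supplies none.
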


\begin{proposition} \lab{prop:6}
It is possible to find and output $\la = (\la_1, \dots, \la_k)$ that is a vector with positive integer entries, in randomized polynomial time in $n, m_0, L, A, \epsilon^{-1}$ and $\log \de^{-1}$ such that $$ \log \mathrm{Cap}_\a(V) \leq \log \left( \frac{V_K(\la_1, \ldots, \la_k)}{\prod_{i=1}^k \la_i^{\a_i}}\right) < \eps + \log \mathrm{Cap}_\a(V),$$ with probability greater than $1 - \de$.
\end{proposition}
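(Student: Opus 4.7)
The plan is to invoke the algorithm mentioned in the paragraph preceding the statement --- the straightforward extension of Gurvits' algorithm from $\mathrm{Cap}$ to $\mathrm{Cap}_\a$ --- which outputs in randomized polynomial time a \emph{real} vector $\la' \in \R^k_{+}$ satisfying $\log \mathrm{Cap}_\a(V) \le \log g(\la') < \log \mathrm{Cap}_\a(V) + \eps/2$ with probability at least $1-\de$, where I abbreviate $g(\la) := V_K(\la)/\prod_i \la_i^{\a_i}$. I then convert $\la'$ into an integer vector $\la$ by scaling and coordinatewise rounding. The structural fact making this cheap is that $V_K$ is homogeneous of degree $n = \sum_i \a_i$, so $g$ is invariant under the dilation $\la' \mapsto c\la'$ for every $c>0$; hence scaling up before rounding costs nothing in the value of $g$ but lets me make the rounding error in log-coordinates as small as I please.

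To quantify this, I compute the Lipschitz constant of $y \mapsto \log g(e^y)$. Writing $V_K(\la) = \sum_\mu p_\mu \la^\mu$ with $\sum_i \mu_i = n$ on the support, the partial derivative $\partial_{y_j}\log g(e^y)$ equals the expectation of $\mu_j$ under the tilted measure $\propto p_\mu e^{\mu\cdot y}$ minus $\a_j$, and hence lies in $[-\a_j,\, n-\a_j]$. Thus $\log g(e^y)$ is $n$-Lipschitz in each coordinate of $y$, and therefore $nk$-Lipschitz in the $\ell_\infty$ metric. I would then choose $c := M / \min_i \la'_i$ with $M := 2nk/\eps + 1$, set $\la''_i := c\la'_i$, and let $\la_i := \lfloor \la''_i \rfloor \in \Z_{>0}$. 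Each $|\log \la_i - \log \la''_i| \le 1/(M-1) = \eps/(2nk)$, so the Lipschitz estimate gives $|\log g(\la) - \log g(\la'')| \le \eps/2$. Combined with $g(\la'') = g(\la')$ (by scale invariance) and the output guarantee from the first step, this yields $\log g(\la) < \log \mathrm{Cap}_\a(V) + \eps$; the matching lower bound is automatic from the definition of capacity.

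The last step is a complexity check. Proposition~\ref{prop:gurv} shows that the minimizer of the associated convex program lies in a ball of $\ell_2$-radius polynomial in $n$ and the input bit length $\langle K\rangle$, so $|\log \la'_i|$ is polynomially bounded; therefore $c$, and hence each integer $\la_i$, has bit length polynomial in the parameters $n, m_0, L, \eps^{-1}, \log\de^{-1}$. The one mildly nonroutine point, which I regard as the main obstacle, is verifying that the convex program and the diameter/Lipschitz estimates used by Gurvits' algorithm --- written in \cite{Gurvits} for the symmetric case $\a=(1,\dots,1)$ with $k=n$ --- go through verbatim after replacing the linear form $\sum_i y_i$ by $\sum_i \a_i y_i$ in the objective. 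Log-convexity of $V_K(e^y)$ and the diameter bound in Proposition~\ref{prop:gurv} remain valid in this weighted setting, so the extension is essentially formal, but the constants must be tracked carefully to conclude the claimed polynomial runtime.
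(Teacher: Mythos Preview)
Your proposal is correct and follows essentially the same approach as the paper: both invoke the Lipschitz property of $y\mapsto\log V_K(e^y)$ (the paper cites Proposition~3.2 of \cite{Gurvits} for the $n$-Lipschitz bound in the Euclidean norm) together with Proposition~\ref{prop:gurv} to conclude. Your treatment is considerably more detailed than the paper's two-line proof---in particular, your explicit use of the degree-$n$ homogeneity of $V_K$ to scale before rounding, and your flagging of the passage from $\mathrm{Cap}$ to $\mathrm{Cap}_\alpha$, make explicit what the paper leaves implicit in calling the result an ``immediate consequence.''
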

\begin{proof}
By Proposition 3.2 of \cite{Gurvits}, $\log V_K(e^{y_1}, \dots, e^{y_k})$ is $n$-Lipschitz function of $\R^k$ equipped with the Euclidean norm, when each convex set $K_i \subseteq \R^n$.
Proposition~\ref{prop:6} is now an immediate consequence of Proposition~\ref{prop:gurv} above.
\end{proof}
\begin{notation} Let $\de_1 := \frac{\de 2^{-C nL}}{100 n}.$ and $\de_2 := \frac{\de_1\de^2 2^{-D}}{1000n^2}$ be a small real numbers where $D$ is a large positive integer bounded below by $km_0 + C'nL$ for a suitable positive absolute constant $C'$. We write $C, c, C'$, etc. to denote various positive universal constants whose value may change from one line to the next.
We suppose that $P_i$ is the convex hull of $V_i$ and that no vertex $v_j \in V_i$ is in the convex hull of $V_i \setminus \{v_j\}$, which can be enforced by examining the $V_i$ before the onset of the algorithm, and removing the $v_j$ that violate this condition.
\end{notation}

\section{A subdivision of the Minkowski sum}
\begin{definition}[Normal cone] The normal cone $N (F, P )$ of a face $F$ of a polytope $x$ is the set
of all vectors $v \in \R^n$ such that the supporting hyperplane $H(P, v)$ contains $F$,
i.e.,
$N (F, P ) = \{v \in \R^n|
F \subseteq H(P, v) \cap P \}.$ In other words, $N(F, P)$ is the set of all vectors $v\in \R^n$ such that for every point $y$ in $F$, $\max\limits_{x \in P} \langle x, v\rangle$ equals $\langle y, v \rangle$.
\end{definition}
The following theorem was proved in  \cite{Sch94}, and appears as
Theorem 15.2.3 of \cite{henk}. 
Given a convex polytope $K$, let  $\operatorname{relint} K$ denote its relative interior.
\begin{theorem}[Schneider]\lab{thm:SSF}
Let $P_1, \ldots, P_k \subseteq \mathbb{R}^n$ be polytopes, $k \geq 2$. Let $x^1, \ldots, x^k \in \mathbb{R}^n$ with $x^1 + \cdots + x^k = 0$, $(x^1, \ldots, x^k) \neq (0, \ldots, 0)$, and
\[
\bigcap_{i=1}^k \left( \operatorname{relint} N(F_i, P_i) - x^i \right) = \emptyset
\]
whenever $F_i$ is a face of $P_i$ and $\dim(F_1) + \cdots + \dim(F_r) > n$. Then
\[
\binom{n}{\a_1, \ldots, \a_k} V(P_1^{(\a_1)}, \ldots,  P_k^{(\a_k)}) = \sum_{(F_1, \ldots, F_k)} V(F_1 + \cdots + F_k),
\]
where the summation extends over the $k$-tuples $(F_1, \ldots, F_k)$ of $\a_i$-faces $F_i$ of $P_i$ with $\dim(F_1 + \cdots + F_k) = n$ and $\bigcap_{i=1}^k (N(F_i, P_i) - x^i) \neq \emptyset$.

The choice of the vectors $x^1, \ldots, x^k$ implies that the selected $\a_i$-faces $F_i \subseteq P_i$ of a summand $F_1 + \cdots + F_k$ are contained in complementary subspaces. Hence one may also write
\[
\binom{n}{\a_1, \ldots, \a_k} V(P_1^{(\a_1)}, \ldots,  P_k^{(\a_k)}) = \sum_{(F_1, \ldots, F_k)} [F_1, \ldots, F_k] \cdot V^{\a_1}(F_1) \cdots V^{\a_k}(F_k),
\]
where $[F_1, \ldots, F_k]$ denotes the volume of the parallelepiped that is the sum of unit cubes in the affine hulls of $F_1, \ldots, F_k$.

Finally, we remark that the selected sums of faces in the formula of the theorem form a subdivision of the polytope $P_1 + \cdots + P_k$, i.e.,
\beq \lab{eq:key}
P_1 + \cdots + P_k = \biguplus_{(F_1, \ldots, F_k)} (F_1 + \cdots + F_k).
\eeq
\end{theorem}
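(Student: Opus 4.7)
The plan is to construct the subdivision~(\ref{eq:key}) of $P_1 + \cdots + P_k$ explicitly by means of the \emph{generic mixed decomposition} induced by the perturbations $x^1, \dots, x^k$, then compute the volume of each top-dimensional cell as a product of intrinsic face volumes times the parallelepiped factor $[F_1, \dots, F_k]$, and finally extract the mixed volume by replacing each $P_i$ by $\la_i P_i$, expanding $\mathrm{Vol}(\la_1 P_1 + \cdots + \la_k P_k)$ as a polynomial in $\la$, and matching the coefficient of $\la^\a$ against the Minkowski polynomial expansion.

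\medskip

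First I would produce the subdivision as follows. The product polytope $P_1 \times \cdots \times P_k$ sits above the Minkowski sum via the summation map $\mu(p_1, \dots, p_k) := \sum_i p_i$, whose fiber over $z \in P_1 + \cdots + P_k$ is a convex polytope. The linear functional $(p_1, \dots, p_k) \mapsto \sum_i \langle p_i, x^i \rangle$ descends to each fiber in a manner compatible with the constraint $\sum_i x^i = 0$. For each $z$ in the Minkowski sum, let $(F_1(z), \dots, F_k(z))$ be the tuple of faces of $P_i$ picked out by the unique face of $\mu^{-1}(z)$ maximizing this functional; these are precisely the faces whose shifted normal cones $N(F_i, P_i) - x^i$ share a common direction $v$. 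As $z$ varies, $(F_1(z), \dots, F_k(z))$ is locally constant, and collecting the cells of constancy gives $P_1 + \cdots + P_k = \biguplus_{(F_1, \dots, F_k)} (F_1 + \cdots + F_k)$.

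\medskip

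Next I would use the genericity assumption $\bigcap_i (\operatorname{relint} N(F_i, P_i) - x^i) = \emptyset$ whenever $\sum_i \dim F_i > n$ to rule out degenerate cells and establish (i) that every top-dimensional ($n$-dimensional) cell corresponds to a tuple with $\sum_i \dim F_i = n$ and $\dim(F_1 + \cdots + F_k) = n$, so the affine hulls of the $F_i$ must be in direct sum and the cell is measurable as a ``tilted product'' of the faces $F_i$; and (ii) that two distinct top-dimensional cells can overlap only on lower-dimensional sets, so the union above is a subdivision. In the direct-sum regime, the $n$-volume of the cell factors as $[F_1, \dots, F_k] \cdot V^{\a_1}(F_1) \cdots V^{\a_k}(F_k)$, where $[F_1, \dots, F_k]$ is the Jacobian determinant between the standard basis of $\R^n$ and a basis adapted to the complementary affine hulls of the $F_i$.

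\medskip

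Finally, applying the whole construction with $P_i$ replaced by $\la_i P_i$ (the normal fans and the entire combinatorial cell structure are scale-invariant, so only the metric data changes) gives
\[
\mathrm{Vol}(\la_1 P_1 + \cdots + \la_k P_k) = \sum_{\a} \la_1^{\a_1}\cdots \la_k^{\a_k} \sum_{(F_1, \dots, F_k)\ \text{of type}\ \a} [F_1, \dots, F_k]\, V^{\a_1}(F_1)\cdots V^{\a_k}(F_k),
\]
and matching the coefficient of $\la_1^{\a_1}\cdots\la_k^{\a_k}$ against the Minkowski polynomial expansion $\sum_\a \binom{n}{\a_1, \dots, \a_k} V(P_1^{(\a_1)}, \dots, P_k^{(\a_k)})\la^\a$ delivers both mixed-volume identities at once, as well as the subdivision~(\ref{eq:key}). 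The main obstacle will be step (i): showing that the perturbation genericity precisely cuts the subdivision down to cells with $\sum_i \dim F_i \le n$ requires a careful induction on the incidence lattice of the common refinement of the shifted normal fans, handling simultaneously the transversality of normal-cone intersections and the fact that the corresponding faces $F_i$ indeed land in complementary subspaces.
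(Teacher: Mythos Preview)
The paper does not prove Theorem~\ref{thm:SSF}; it quotes it verbatim as a known result of Schneider, citing \cite{Sch94} and the exposition in \cite{henk} (Theorem~15.2.3). There is therefore no in-paper proof to compare your proposal against.

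That said, your sketch is essentially the standard route to Schneider's theorem: lift to the product $P_1\times\cdots\times P_k$, optimize the tilted functional $\sum_i\langle p_i,x^i\rangle$ on the fibers of the summation map, read off the face tuples from the normal-cone condition, and then use the genericity hypothesis on the shifted normal fans to force $\sum_i\dim F_i\le n$ for every full-dimensional cell. The final coefficient-matching against the Minkowski polynomial is exactly how one extracts the mixed-volume identity once the subdivision is in hand. One point worth tightening: the sentence ``$(F_1(z),\dots,F_k(z))$ is locally constant'' is only true off a measure-zero set (the boundaries between cells), and you should also verify that the maximizing face on each fiber is unique for generic $z$---this again uses the perturbation hypothesis. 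Your flagged obstacle in step~(i) is real but not deep: the common refinement of the shifted normal fans $\{N(F_i,P_i)-x^i\}$ is a complete fan in $\R^n$, and the genericity condition says precisely that no cone in this refinement arises from a tuple with $\sum_i\dim F_i>n$; since each cone of the refinement has dimension $n-\sum_i\dim F_i$ when the intersection is transverse, the full-dimensional cells of the subdivision are forced to have $\sum_i\dim F_i=n$.
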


\begin{center}
\begin{tikzpicture}[scale=2]

  \coordinate (S1) at (0,0);
  \coordinate (S2) at (1,0);
  \coordinate (S3) at (1,1);
  \coordinate (S4) at (0,1);

  \coordinate (t1) at (0,0);            
  \coordinate (t2) at (0.5,1.2);        
  \coordinate (t3) at (1,0);            

  \coordinate (T1) at ($(S3)+(t1)$);
  \coordinate (T2) at ($(S3)+(t2)$);
  \coordinate (T3) at ($(S3)+(t3)$);

  \coordinate (M1) at ($(S1)+(t1)$);
  \coordinate (M2) at ($(S2)+(t1)$);
  \coordinate (M3) at ($(S2)+(t3)$);
  \coordinate (M4) at ($(S3)+(t3)$);
  \coordinate (M5) at ($(S3)+(t2)$);
  \coordinate (M6) at ($(S4)+(t2)$);
  \coordinate (M7) at ($(S4)+(t1)$);

  \fill[gray!30] (M1) -- (M2) -- (M3) -- (M4) -- (M5) -- (M6) -- (M7) -- cycle;
  \draw[thick] (M1) -- (M2) -- (M3) -- (M4) -- (M5) -- (M6) -- (M7) -- cycle;

  \fill[blue!20] (S1) -- (S2) -- (S3) -- (S4) -- cycle;
  \draw[blue, thick] (S1) -- (S2) -- (S3) -- (S4) -- cycle;
  \node[blue] at (0.5, -0.15) {$P_1$};

  \fill[red!20] (T1) -- (T2) -- (T3) -- cycle;
  \draw[red, thick] (T1) -- (T2) -- (T3) -- cycle;
  \node[red] at ($(T2)+(0.3,0)$) {$P_2$};

\end{tikzpicture}
\end{center}

\begin{figure}[ht]
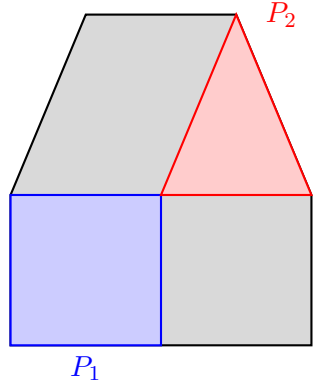

 \caption{The above figure,  illustrates Theorem~\ref{thm:SSF}. Here $k= n = 2$ and $P_1$ and $P_2$ are respectively the blue square  and the red triangle. The origin is $P_1 \cap P_2$ and $P_1 + P_2$ is the polygon above, subdivided into convex polygons of the form $F_1 + F_2$ as stated in Theorem~\ref{thm:SSF}.} The respective Lebesgue measures  of the blue, red and grey portions equal $V(P_1, P_1)$, $V(P_2, P_2)$, and $2V(P_1, P_2)$ respectively.
  \label{fig:example}

\end{figure}

\section{Algorithm}\lab{sec:4}

Given a target vector $z \in \mathbb{Q}^n \cap\{\sum_{i = 1}^k \la_i P_i\}$ and a set of
vertices 
\[
\mathcal{V}\;=\;\bigl\{\la_i v_{ij}\in\mathbb{Z}^n \,: \; i=1,\dots,k,\; v_{ij} \in V_i,\;
|V_i|\le m_0\bigr\},
\]
find a vertex ${\bf w} = \{w_{ij}\}_{i \in [k], v_{ij} \in V_i}$ 
that maximizes  $$\sum_{i=1}^k \langle x^i, \sum_{v_{ij} \in V_i} w_{ij}\la_i v_{ij}\rangle$$ in the polytope $Q_z$ of non-negative weights $w_{ij}$ such that  
\begin{align}\tag{LP}
  &\sum_{i=1}^k \sum_{v_{ij}\in V_i} w_{ij}\la_i v_{ij} = z, \label{eq:obj}\\
  &\sum_{i=1}^k \sum_{v_{ij}\in V_i} w_{ij}         = 1, \nonumber \\
  &w_{ij} \ge 0 \quad \forall\,i\in\{1,\dots,k\},\; v_{ij} \in V_i, \nonumber 
\end{align}
Let $z^{(i)} = \sum_j w_{ij} v_i \in \Q^n.$ 

\begin{algorithm}
\caption{Sampling and Facet Dimension Test}
\begin{algorithmic}[1]
\State Set \texttt{N} to $\Omega(A\eps^{-2} \log \de^{-1})$.
\State Compute $\{\la_1, \dots, \la_k\}$ using Proposition~\ref{prop:6}.
\State Compute an estimate $\widehat{V}_K(\la_1, \ldots, \la_k)$ that with probability greater than $ 1 - \frac{\de}{50}$ is a multiplicative $ 1 \pm \eps$ estimate of $V_K(\la_1, \ldots, \la_k)$.
\State Initialize \texttt{T} $\gets 0$.
\For{$\texttt{count} = 1$ to \texttt{N}}
    \State Sample $z $ from a measure $\mu$ supported on $2^{- D_2} \Z^n$ that is $\de_2$-close in $W_1$ to the uniform measure $\nu$ on $\lambda_1 P_1 + \cdots + \lambda_k P_k$, where $D_2 := C \lceil \log \de_2^{-1}\rceil.$
    \State For $i \in [k]$ and $j \in [|V_i|]$, set $w_{ij} \in \Q$ so that the matrix $\mathbf{w} = (w_{ij})_{i, j}$ corresponds to a vertex of $Q_z$ that maximizes the objective $\sum_{i=1}^k \langle x^i, \sum_{v_{ij} \in V_i} w_{ij}\la_i v_{ij}\rangle.$
    \State  For $i \in [k]$,  set $z^{(i)} = \sum_j w_{ij} v_{ij} \in \Q^n.$
      \For{$i = 1$ to $k$}
        \State Find the  face $\hat{F}_i \subseteq P_i$ such that $z^{(i)} \in  \operatorname{relint} \hat{F}_i$ (see Lemma~\ref{lem:relint} for details). 
        \State Set $n_i \gets \dim(\hat{F}_i)$.
    \EndFor
    \If{$\forall i \in [k], n_i = \a_i$}
        \State \texttt{T} $\gets$ \texttt{T} $+ 1$
    \EndIf
\EndFor
\State Output $\frac{\texttt{T}}{\texttt{N}} \left( \frac{\widehat{V}_K(\la_1, \ldots, \la_k)}{\prod_{i=1}^k \la_i^{\a_i}}\right) $.
\end{algorithmic}
\end{algorithm}

\begin{remark} $\dim F_i$ is determined with high probability as follows.
Given fixed $i' \in [k]$, and  $z^{(i')} = \sum_j w_{i'j} v_{i'j}  \in P_{i'}$, we know that $F_{i'}$ defined by $z^{(i')} \in  \operatorname{relint} \hat{F}_{i'}$ contains the set of points  $$V_{i'}^{z} := \{v_{i'j} \in V_{i'} \text{ such that }\, w_{i'j} \neq 0\}.$$ However, we have the following Lemma~\ref{lem:relint}. \end{remark}
\begin{lemma}\lab{lem:relint} With probability at least $1 - \frac{\de}{{\texttt{N}}(m_0 n)^C}$, $$\dim F_{i'} = |V_{i'}^z| -1.$$ \end{lemma}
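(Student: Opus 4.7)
My plan is to combine a deterministic structural fact coming from Schneider's theorem (Theorem~\ref{thm:SSF}) with a probabilistic tail bound on the subdivision's boundary. First I would apply Theorem~\ref{thm:SSF} to the scaled polytopes $\lambda_1 P_1,\dots,\lambda_k P_k$, using the \emph{same} vectors $x^1,\dots,x^k$ that already appear in the LP objective. This produces a subdivision of $\sum_i \lambda_i P_i$ into cells of the form $\lambda_1 F_1+\cdots+\lambda_k F_k$, with $\sum_i \dim F_i = n$ and the affine hulls of the $F_i$ lying in complementary subspaces. For any $z$ in the relative interior of such a cell, the decomposition $z=\sum_i \lambda_i z^{(i)}$ with $z^{(i)}\in F_i$ is therefore unique, and each $z^{(i)}$ sits in $\operatorname{relint}(\lambda_i F_i)$.

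The deterministic core is that maximizing the LP objective $\sum_i \langle x^i, \sum_{v_{ij} \in V_i} w_{ij}\lambda_i v_{ij}\rangle$ is aligned with the defining inequalities of the Schneider cell containing $z$: only vertices $v_{ij}\in F_i$ can receive strictly positive weight at the optimum, so $V_{i'}^z\subseteq F_{i'}$ and in particular $\hat F_{i'}\subseteq F_{i'}$. Combining (i) the LP dimension count, which bounds the support $|V^z|$ at a vertex by the rank of the equality system, with (ii) the lower bound $|V_i^z|\ge\dim F_i+1$---forced because the normalized point $z^{(i)}/\sum_j w_{ij}$ lies in $\operatorname{relint}(F_i)$ and so cannot be written as a convex combination using fewer than $\dim F_i+1$ affinely independent vertices of $F_i$---propagates equality through every $i$. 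This yields $|V_{i'}^z| = \dim F_{i'}+1$, i.e.\ $\dim F_{i'}=|V_{i'}^z|-1$.

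It remains to bound the probability that the sampled $z$ fails to lie in the interior of any cell. The boundary $\mathcal{B}$ of the Schneider subdivision is a finite union of $(n-1)$-dimensional polytopes whose count and total $(n-1)$-volume are polynomially bounded in $n,m_0$ and the $\lambda_i$. Since $\mu$ is $\delta_2$-close to the uniform measure $\nu$ in $W_1$, a standard Lipschitz-approximation argument (test against a function that is $1$ on $\mathcal{B}_\eta$, $0$ off $\mathcal{B}_{2\eta}$, and $1/\eta$-Lipschitz in between) yields $\mu(\mathcal{B}_\eta)\le\nu(\mathcal{B}_{2\eta})+\delta_2/\eta$. A Weyl-type tube estimate then bounds $\nu(\mathcal{B}_{2\eta})$ by $O(\eta)$ times a polynomial in the parameters, divided by $\mathrm{Vol}(\sum_i\lambda_i P_i)$; the latter is controlled from below via Proposition~\ref{prop:6} together with the Lorentzian capacity bounds. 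Choosing $\eta=2^{-\Omega(D)}$ pushes the resulting failure probability below $\delta/(\texttt{N}(m_0n)^C)$.

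The main obstacle I anticipate is the bookkeeping in the final step: marrying the potentially tiny volume of the Minkowski sum with the discretization scale $2^{-D_2}$ and the Wasserstein error $\delta_2$. The definitions of $\delta_2$ and $D$ in the notation are engineered precisely to bridge this gap, but verifying that the constants line up requires care. A secondary concern is the genericity of $x^1,\dots,x^k$: they must be generic enough that the Schneider subdivision is well-defined and that the LP maximum is unique and breaks ties consistently with that subdivision; this can be ensured by an initial small random perturbation of the $x^i$, costing only a further negligible probability absorbed into $\delta/(\texttt{N}(m_0n)^C)$.
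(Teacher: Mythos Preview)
Your approach is different from the paper's, and considerably more elaborate. The paper's two-line proof does not invoke Schneider's subdivision at all: it simply observes that whenever $|V_{i'}^z|-1>\dim F_{i'}$ (i.e.\ the support $V_{i'}^z$ is affinely dependent), the set $\mathrm{conv}(V_{i'}^z)+\sum_{j\neq i'}F_j$ sits inside $\sum_i\lambda_iP_i$ with positive codimension, so the set of $z$'s landing in such a configuration is a finite union of thin slabs whose total measure is below the threshold once $D_2$ is chosen large enough relative to $nL$. In other words, the paper treats this lemma as a pure measure-concentration statement about avoiding a lower-dimensional exceptional set. Your route instead establishes the identity $\dim F_{i'}=|V_{i'}^z|-1$ \emph{structurally}, by first matching $\hat F_i$ to the Schneider face $F_i$ via the LP/KKT alignment and then squeezing $\sum_i|V_i^z|$ between the LP-vertex support bound and the lower bound $\sum_i(\dim F_i+1)$. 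This is essentially the content of the paper's later Lemma~\ref{lem:11} (Claims 1--3), so you are front-loading that argument into the present lemma. The payoff is a cleaner, more conceptual proof; the cost is that Lemma~\ref{lem:relint} is no longer logically prior to, or independent of, the Schneider machinery.

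One point that does need attention in your write-up: the ``equality squeeze'' only closes if the rank of the LP equality system is $n+k$, i.e.\ one normalization $\sum_j w_{ij}=1$ per index $i$. With the single global constraint $\sum_{i,j}w_{ij}=1$ appearing in the paper's displayed (LP), the vertex-support bound is $n+1$, while your lower bound is $\sum_i(\dim F_i+1)=n+k$, and for $k\ge2$ these are inconsistent---which would in fact show that \emph{no} vertex of that $Q_z$ corresponds to a Schneider decomposition. The reformulation (LP1) in the proof of Lemma~\ref{lem:11} makes clear that the intended feasibility region has $k$ separate membership constraints $z^{(i)}\in\lambda_iP_i$, so the rank really is $n+k$; you should state explicitly that you work with this version and that the rank bound is $n+k$, or your squeeze step is formally broken.
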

\begin{proof} This is because $\mathrm{conv}(V_{i'}^z) + \sum_{j \neq i'} F_i$ is a polytope contained in $Q_z$ of positive codimension, 
and $Q_z$ (a polytope whose diameter is bounded above by $n 2^{CL}$) contains a full dimensional ball of radius at least $2^{-CnL},$ and $\log D_2$ is larger than $CnL$ by a universal multiplicative constant.
\end{proof}

The \textbf{Wasserstein distance} \( W_1 \) between two Borel probability measures \( \mu \) and \( \nu \) on \( \mathbb{R}^n \) is defined as:
\[
W_1(\mu, \nu) := \inf_{\pi \in \Pi(\mu, \nu)} \int_{\mathbb{R}^n \times \mathbb{R}^n} \|x - y\| \, d\pi(x, y),
\]
where:
\begin{itemize}
    \item \( \|x - y\| \) denotes the Euclidean norm on \( \mathbb{R}^n \),
    \item \( \Pi(\mu, \nu) \) is the set of \emph{couplings} of \( \mu \) and \( \nu \), i.e., probability measures \( \pi \) on \( \mathbb{R}^n \times \mathbb{R}^n \) such that
    \[
    \pi(A \times \mathbb{R}^n) = \mu(A), \quad \pi(\mathbb{R}^n \times B) = \nu(B),
    \]
    for all Borel sets \( A, B \subseteq \mathbb{R}^n \).
\end{itemize}

\begin{remark}\lab{rem:2} We will further assume that for each $i \in [k]$, the $x^{(i)}$ does not belong to the {\bf boundary} $\partial N(F_i, P_i)$ of an outer normal cone for any face $F_i \subseteq P_i$. This assumption holds true with high probability if $x = (x^{(1)}, \dots, x^{(k)})$ is chosen uniformly at random from the set of rational points in any bounded convex set $K$ containing the unit ball $(B_n(0, 1))^{\times k} \cap \{(x^{(1)}, \dots, x^{(k)})| \sum_i x^{(i)} = 0\}$, expressible as a ratio of integers with denominator $2^{D_2}$ such that $\sum_i x^{(i)} = 0.$
\end{remark}

\section{Main theorem}

\begin{theorem}\lab{thm:main}
 $\mathbf{Algorithm \, 1}$  produces an estimate of $V(P_1^{(\a_1)}, \ldots, P_k^{(\a_k)})$ that is within a multiplicative $1 \pm \epsilon$ factor of the true mixed volume with a probability greater than $1 - \de.$  When each $P_i \subseteq B_\infty(2^L)$, the time complexity is bounded above by a polynomial in $n, m_0, L, A, \epsilon^{-1}$ and $\log \de^{-1}$. 
 \end{theorem}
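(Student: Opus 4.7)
The plan is to analyse the estimator in three stages: compute the ideal acceptance probability via Schneider's subdivision, bound it below via Brändén-Leake-Pak together with Proposition~\ref{prop:6}, and account for all statistical and computational error terms.

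For the first stage, I would apply Theorem~\ref{thm:SSF} to the rescaled polytopes $\lambda_1 P_1,\ldots,\lambda_k P_k$ using the shift vectors $x^1,\ldots,x^k$ of Remark~\ref{rem:2}. By genericity of $x$, Schneider's non-degeneracy hypothesis holds, and the Minkowski sum admits the partition
\[
\lambda_1 P_1 + \cdots + \lambda_k P_k \;=\; \biguplus_{(F_1,\ldots,F_k)} \bigl(\lambda_1 F_1 + \cdots + \lambda_k F_k\bigr),
\]
where each $F_i$ ranges over $\alpha_i$-faces of $P_i$. I would then verify that, for $z$ off a measure-zero set of cell walls, the LP vertex $\mathbf{w}$ computed in step 7 encodes the unique Schneider cell containing $z$: maximising $\sum_i\langle x^i,\lambda_i z^{(i)}\rangle$ over $Q_z$ pushes each $z^{(i)}$ into the face of $P_i$ exposed by $x^i$ subject to $\sum_i\lambda_i z^{(i)} = z$, and by the shift condition $\sum_i x^i = 0$ this is exactly the tuple $(F_1,\ldots,F_k)$ appearing in Schneider's sum. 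Lemma~\ref{lem:relint} then reads off $\dim F_i$ from the support of the LP vertex with per-sample failure probability at most $\delta/(\texttt{N}(m_0 n)^C)$, absorbed by a union bound. By multilinearity of the mixed volume together with Schneider's formula, the probability that a sample from the exact uniform distribution $\nu$ on the Minkowski sum is accepted equals
\[
p \;=\; \frac{\binom{n}{\alpha_1,\ldots,\alpha_k}\,\lambda^{\alpha}\,V(P_1^{(\alpha_1)},\ldots,P_k^{(\alpha_k)})}{V_K(\lambda_1,\ldots,\lambda_k)}.
\]

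For the second stage, combine Proposition~\ref{prop:6} with Brändén-Leake-Pak and Definition~\ref{def:4}: Proposition~\ref{prop:6} gives $V_K(\lambda)/\lambda^\alpha \le e^{\epsilon/4}\mathrm{Cap}_\alpha(V_K)$ with failure at most $\delta/10$, and Brändén-Leake-Pak together with Definition~\ref{def:4} gives $\mathrm{Cap}_\alpha(V_K) \le A\cdot c_\alpha$, where $c_\alpha$ is the coefficient of $\lambda^\alpha$ in $V_K$ and hence proportional to the mixed volume. Consequently $p \ge 1/(2A)$, and a multiplicative Chernoff bound applied to $\texttt{N}=\Theta(A\epsilon^{-2}\log\delta^{-1})$ independent samples from $\nu$ gives $T/\texttt{N}\in(1\pm\epsilon/4)p$ with failure at most $\delta/10$. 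Combining with the $(1\pm\epsilon/4)$ estimate $\widehat V_K(\lambda)$ from step 3 (via a standard polynomial-time convex-body volume algorithm applied to the Minkowski sum, whose vertex description has bitlength polynomial in $k,m_0,L$) and dividing by $\lambda^\alpha$ yields an output within a multiplicative $(1\pm\epsilon)$ factor of $V(P_1^{(\alpha_1)},\ldots,P_k^{(\alpha_k)})$, up to the multinomial constant inherited from Schneider's formula, with total failure probability at most $\delta$ after a final union bound.

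The cost of each sub-routine is bounded individually: Proposition~\ref{prop:6} for capacity optimisation, an approximately uniform sampler producing $\mu$ on the Minkowski sum, a convex-body volume estimator, the LP of polynomial size, and the face-dimension readout of Lemma~\ref{lem:relint} are each polynomial in $n,m_0,L,A,\epsilon^{-1},\log\delta^{-1}$, and the outer loop multiplies by $\texttt{N}=\mathrm{poly}(A,\epsilon^{-1},\log\delta^{-1})$. The hardest part, I expect, is passing from the ideal samples from $\nu$ to the $W_1$-approximate samples from $\mu$ in step 6: the acceptance indicator is not continuous in $z$, so Kantorovich-Rubinstein cannot be applied directly. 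I would handle this by showing that the discontinuity set is contained in a finite union of codimension-$1$ Schneider cell walls (of $\nu$-measure zero), and that the choice of $\delta_2$ together with the $2^{-D_2}$ lattice discretisation ensures that $\mu$ assigns total mass at most $\delta_1$ to any polynomially-thin tube around those walls, so that the $\mu$- and $\nu$-acceptance probabilities differ by at most an additive $O(\delta_1)$, well within the overall $\delta$ budget.
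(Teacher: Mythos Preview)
Your overall architecture matches the paper's: Schneider's subdivision to interpret the acceptance probability, the Br\"and\'en--Leake--Pak bound combined with Proposition~\ref{prop:6} to get $p\gtrsim 1/A$, a Chernoff bound over $\texttt{N}=\Theta(A\epsilon^{-2}\log\delta^{-1})$ samples, and a thin-tube argument near cell walls to pass from $\nu$ to the $W_1$-close measure $\mu$. You are also right that the discontinuity-set argument is the delicate point, and the paper handles it exactly as you anticipate (Lemma~\ref{lem:9} and the coupling in (\ref{eq:2})--(\ref{eq:4})).

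The genuine gap is your justification that the LP vertex in step~7 recovers the Schneider cell of $z$. You write that maximising $\sum_i\langle x^i,\lambda_i z^{(i)}\rangle$ ``pushes each $z^{(i)}$ into the face of $P_i$ exposed by $x^i$'', invoking $\sum_i x^i=0$. But the face of $P_i$ exposed by $x^i$ alone is a \emph{fixed} face of $P_i$, independent of $z$; it cannot be the Schneider face $F_i$, which varies from cell to cell. The constraint $\sum_i\lambda_i z^{(i)}=z$ couples the $z^{(i)}$, and the correct statement is not that $x^i$ exposes $\hat F_i$ but that some \emph{common shift} $x^i-v$ does, for a single $v\in\mathbb{R}^n$ depending on $z$. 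This is precisely Schneider's condition $\bigcap_i(N(\hat F_i,P_i)-x^i)\ni -v$, and it does not follow from the exposure intuition you give.

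The paper establishes this as Lemma~\ref{lem:11} via a KKT argument (Claim~\ref{cl:1}). One views the LP as optimising $(x^1,\dots,x^k)$ over the product polytope $\bigoplus_i(\lambda_iP_i-z/k)$ intersected with the subspace $S=\{(v^{(1)},\dots,v^{(k)}):\sum_i v^{(i)}=0\}$. At an optimum, KKT forces $(x^1,\dots,x^k)$ to decompose as an element of $S^\perp$ plus an element of the outer normal cone of the product polytope. Since $S^\perp$ is the diagonal $\{(v,\dots,v):v\in\mathbb{R}^n\}$, this yields a single $v$ with $x^i-v\in N(\lambda_i\hat F_i,\lambda_iP_i)$ for every $i$, hence $-v\in\bigcap_i(N(\hat F_i,P_i)-x^i)$. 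One then needs two further claims: that $\sum_i\dim\hat F_i\le n$ (from Schneider's genericity hypothesis on $x$), and that $\sum_i\dim\hat F_i<n$ has small probability (the paper uses K.~Ball's section bound to control the total $\nu$-mass of the at most $2^{km_0}$ lower-dimensional affine spans). Your sketch absorbs the first of these into the Schneider statement and conflates the second with the $W_1$ tube argument; they are related but distinct, and both need to be made explicit for the argument to close.
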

\begin{proof}
The polytope $Q_z$ is nonempty with probability $1$, when $z$ is chosen uniformly at random from $\sum_i \la_i P_i$ by Theorem~\ref{thm:SSF}.

Let $z \in \la_1 P_1 + \dots + \la_k P_k$ be sampled from a measure $\de_2$-close in $W_1$ to the uniform measure on $\la_1 P_1 + \dots + \la_k P_k$.
By (\ref{eq:key}), $$\la_1 P_1 + \cdots + \la_k P_k = \biguplus_{(\la_1 F_1, \ldots, \la_k F_k)} (\la_1 F_1 + \cdots + \la_k F_k),$$ where the summation extends over the $k$-tuples $(F_1, \ldots, F_k)$ of $\a_i$-faces $F_i$ of $P_i$ with $\dim(F_1 + \cdots + F_k) = n$ and $\bigcap_{i=1}^k (\operatorname{relint} N(\la_i F_i, \la_i P_i) - x^i_\la) \neq \emptyset$, where the $x^i_\la$ correspond to the points $x^i$ that appear in Theorem~\ref{thm:SSF}, and we  just choose $x_\la^i = x^i$, because all the conditions of Theorem~\ref{thm:SSF} are satisfied with this choice.

Let $\check{z}$ be a point sampled from the uniform measure  on $\la_1 P_1 + \dots + \la_k P_k$. Then, by choosing an optimal coupling $\pi$ with respect to $W_1$, we see that there exists an $\R^n\oplus \R^n$ valued random variable  $(\check{x}, \eta)$   such that $\check{x} + \eta$ has the same distribution as $x$ in the statement of Algorithm 1, where $\E[|\eta|] \leq \de_2 = \frac{\de_1 \de^2 2^{-D}}{1000 n^2}$.
 Let $P_{\check{z}} := \la_1 F_1 + \cdots + \la_k F_k$, where $\dim(F_1 + \cdots + F_k) = n$ and $\bigcap_{i=1}^k (\operatorname{relint} N(F_i, P_i) - x^i) \neq \emptyset$ be a (closed) polytope that contains $\check{z}$, and let $P_{\check{z}}^\circ$ denote its interior. Such a polytope exists and is unique with probability $1$ over the choice of $\check{z}$ by Theorem~\ref{thm:SSF}.

By Markov's inequality, \beq\lab{eq:2} \pi\left(|\eta| < \frac{\de_2}{10\de n}\right) > 1 - \frac{\de}{100}.\eeq 

\begin{lemma} \lab{lem:9} We have that \beq\lab{eq:3}\pi\left(B(\check{z}, \frac{\de_2}{10\de n}) \subseteq P_{\check{z}}^\circ\right) > 1 - \frac{\de}{100}.\eeq
\end{lemma}
\begin{proof}
This follows from the fact that that $\de_1< \frac{\de 2^{-CnL}}{100 n}$ and that each facet $\la_i F_i$ of the lattice polytope $\la_i P_i$ contains a $\dim(F_i)$ dimensional simplex with integral vertices (each of whose norm in $\ell_\infty^n$ is less than $2^L$,) and hence a $\dim(F_i)$ dimensional ball of radius $10 \de_1$.  

\end{proof}
By Lemma~\ref{lem:9} and (\ref{eq:2}) above, \beq \lab{eq:4}  \pi\left(z \in P_{\check{z}}^\circ\right) > 1 - \frac{\de}{50}. \eeq
We will denote the event that $z \in P_{\check{z}}^\circ$ by $E$.

\begin{lemma}\lab{lem:11}
 Given the decomposition $z = \sum_i z^{(i)}$ obtained from a vertex $\mathbf{w}$ of $ Q_z$  that maximizes the objective $\sum_{i=1}^k \langle x^i, \sum_{v_{ij} \in V_i} w_{ij}\la_i v_{ij}\rangle$
 as in Algorithm 1,  if $\hat{F}_i$ is the (unique) face such that $\hat{F}_i \subseteq P_i$ and $z^{(i)} \in  \operatorname{relint} \la_i \hat{F}_i$, then $$\pi\left(\forall i \in [k], \hat{F}_i = F_i \right) > 1- \frac{\de}{10}.$$ 
\end{lemma}
\begin{proof}
By (\ref{eq:key}), Lemma~\ref{lem:9} and (\ref{eq:4}), it suffices to prove that with probability greater than $1 - \frac{\de}{10}$ we have both that $\dim(\hat{F_1} + \cdots + \hat{F}_k) = n$ and that $\bigcap_{i=1}^k (\operatorname{relint}  N(\la_i \hat{F}_i, \la_i P_i) - x^i_\la) \neq \emptyset$.
 We first prove 
   \begin{claim} \lab{cl:1} $$\pi\left(\bigcap_{i=1}^k (\operatorname{relint}  N(\la_i \hat{F}_i, \la_i P_i) - x^i_\la) \neq \emptyset \bigg|E\right) = 1.$$ \end{claim}
 \begin{proof}
 Conditional on $E$, the primal LP labelled (LP) is bounded and feasible. This LP can be rewritten as  follows. Find, for $i \in [k]$, $z^{(i)}$ that:
 \begin{equation}
 \begin{aligned}
& \text{maximize} && \sum_{i=1}^k \langle x^{i}, z^{(i)}  - \frac{z}{k}\rangle \\
& \text{subject to} && z^{(i)}- \frac{z}{k} \in \la_i P_i - \frac{z}{k} \quad \text{for all } i = 1, \dots, k, \,\text{and}\\
& && \sum_{i=1}^k (z^{(i)} - \frac{z}{k})= 0.
\end{aligned}
\tag{LP1}
\end{equation}
The state of affairs with (LP1) can be summarized geometrically as follows.
 Consider the direct sum polytope $\bigoplus_{i=1}^k  (\la_i P_i - \frac{z}{k})$ (that is a subset of $\R^{nk}$) intersected with the subspace $S$ of all vectors of the form $(v^{(1)}, \dots, v^{(k)})$ where $\sum_i v^{(i)} = 0,$ and call the resulting polytope $T$. Now we optimize on this polytope $T$  in the direction of $(x^{1}, \dots, x^{k})$. 
 It follows from KKT conditions that at an optimal point $\mathbf{w}'$, $(x^1, \dots, x^k)$ can be expressed as the sum of two vectors, one in the complementary subspace $S^\perp$ orthogonal to $S$, and the other in the outer normal cone at ${\mathbf w}'$ to $\bigoplus_{i=1}^k  (\la_i P_i - \frac{z}{k})$. Now $S^\perp$ consists precisely of those vectors that have the form $(v, \dots, v)$ for some $v \in \R^n$; that this set is contained in $S^\perp$ is clear, and dimensional considerations show that $S^\perp$ can contain no other vector. But the outer normal cone at $\mathbf{w}'$ to $\bigoplus_{i=1}^k  (\la_i P_i - \frac{z}{k})$ is precisely $$\bigoplus_{i=1}^k N(\la_i \hat{F}_i, \la_i P_i).$$ We have the following:
 $\bigoplus_{i=1}^k N(\la_i \hat{F}_i, \la_i P_i)$ contains a vector of the form $(x^1, \dots, x^k) - (v, \dots, v)$.
 And so,  in view of Remark~\ref{rem:2},
 $$\pi\left(\bigcap_{i=1}^k (\operatorname{relint}  N(\la_i \hat{F}_i, \la_i P_i) - x^i) \ni \{- v\}|E\right) = 1,$$ for some vector $v \in \R^n$. This proves Claim~\ref{cl:1}. 
 
 \end{proof}

 \begin{claim}
 $$\pi\left(\dim(\hat{F_1} + \cdots + \hat{F}_k) > n|E\right) = 0.$$
 \end{claim}
 \begin{proof}
 We know that $\pi(\check{z} \in P_{\check{z}}^\circ) = 1$. The conditional probability of the event $z \in P_{\check{z}}^\circ$ given $E$ is also $1$. By the choice of $x_\la^{(i)}$ and the condition that  \[
\bigcap_{i=1}^k \left( \operatorname{relint} N(\la_i \tilde{F}_i, \la_i P_i) - x^i \right) = \emptyset
\]
whenever $\tilde{F}_i$ is a face of $P_i$ and $\dim(\tilde{F}_1) + \cdots + \dim(\tilde{F}_k) > n$ in Theorem~\ref{thm:SSF}, it follows from Claim~\ref{cl:1} that $$\pi\left(\dim(\hat{F}_1 + \cdots + \hat{F}_k) > n|E\right) = 0.$$
\end{proof}
To prove Lemma~\ref{lem:11}, it now suffices by (\ref{eq:4}) to prove the following.  

\begin{claim}
$$\pi\left(\dim(\hat{F}_1 + \cdots + \hat{F}_k) < n|E\right) < \frac{\de}{50}.$$
\end{claim}
\begin{proof}
The total number of subspaces of dimension less than $n$ of the form $$\mathrm{span}(\hat{F}_1 + \cdots + \hat{F}_k)$$ is bounded above by 
$2^{km_0}$. K. Ball in \cite{Ball} proved that the volume of a section of a unit $\ell_\infty$ ball of dimension $n$ is bounded above by $2^n \sqrt{2}$. Thus, the total volume of slabs that are neighborhoods of such subspaces, of thickness $2^{-D}$, contained inside an $\ell_\infty^n$ ball of radius $2^{CnL}$ is bounded above by $\sqrt{2} \de (2^{-D}) (2^{km_0}) (2^{CnL}).$  When $D$ is bounded below by $k m_0 + C' nL$ for an appropriately large positive absolute constant $C'$,  this is indeed less than $$\frac{\de}{50} \vol(\la_1 P_1 + \dots + \la_k  P_k)$$ as needed. This proves the claim.
\end{proof}
Lemma~\ref{lem:11} now follows.
 \end{proof}

We now proceed to analyze the time complexity of Algorithm 1.
By the description in Section 3.5 of \cite{Gurvits}, the complexity of estimating $\widehat{V}_K(\la_1, \ldots, \la_k)$ is bounded above by 
$$O(n^{10}(\log(n)+ \langle K \rangle)^2 \log(\de^{-1})),$$ (based on sampling and volume computation algorithms introduced in \cite{DFK}; for the best known bounds, see \cite{LV}). The results in \cite{LV} shows that the complexity of sampling $\mathbf{w}$ is bounded above by an expression of the same form if one uses a real number model. To sample from lattice points with a small $W_1$-distance, it suffices to note that a total variation distance of less than $\frac{\de 2^{- CL}}{k \sqrt{n}}$ implies that the $W_1$ distance is less than $\de$ because the $\ell_2$ diameter of $\la_1 P_1 + \dots + \la_k P_k$ is less than $2^{CL}k \sqrt{n}$. Finally, the complexity of finding a vertex ${\mathbf w}$ of $Q_z$ is polynomial in the input parameters if one uses the ellipsoid method \cite{khach, gls}. It is known (see for example \cite{Karm}) that the difference in the objective value of an optimal vertex, and a vertex with the second best objective, is bounded below by $2^{- poly(nLD_2)}$. It is possible to find a point with the optimal objective using linear programming followed by linear algebra in polynomial time. The set of points in $Q_z$, where the optimal value of the objective is achieved is a face of $Q_z$. A vertex of this face can be obtained in polynomial time in the input parameters using linear programming.
\end{proof}
\section{Conclusion}

Let $(P_1, \dots, P_k)$ be a $k$-tuple of convex polytopes, each of which is defined as the convex hull of at most $m_0$ points in $\Z^n$.  

 $\mathbf{Algorithm \, 1}$ of Section~\ref{sec:4} produces an estimate of $V(P_1^{(\a_1)}, \ldots, P_k^{(\a_k)})$ that is within a multiplicative $1 \pm \epsilon$ factor of the true mixed volume with a probability greater than $1 - \de.$  Let the constant $ \prod_{i=2}^{k} \frac{(\alpha_{i}+1)^{\alpha_{i}+1}}{\alpha_{i}^{\,\alpha_{i}}}$ be denoted by $\tilde{A}$.  When each $P_i \subseteq B_\infty(2^L)$, the time complexity is bounded above by a polynomial in $n, m_0, L, \tilde{A}, \epsilon^{-1}$ and $\log \de^{-1}$. 

 In fact, in Theorem~\ref{thm:main}, we prove a bound on the complexity that is polynomial in $n, m_0, L, A, \epsilon^{-1}$ and $\log \de^{-1}$ for a more refined quantity $A$ (see Definition~\ref{def:4}) that is less or equal to $\tilde{A}$.

\bibliographystyle{plain}
\bibliography{references} 

\end{document}